\newtheorem{theorem}{Theorem}
\newtheorem{lemma}[theorem]{Lemma}
\def\Lscr{\mathcal{L}}
\def\Nscr{\mathcal{N}}
\def\Wscr{\mathcal{W}}
\def\Sscr{\mathcal{S}}
\newcommand{\symdiff}{\bigtriangleup}
\renewcommand{\epsilon}{\varepsilon}
\newcommand{\odd}{\ensuremath{\mathrm{odd}}}
\def\cupp{\stackrel{.}{\cup}}
\title{Improving on Best-of-Many-Christofides for $T$-tours}
\author{Vera Traub\thanks{
Department of Mathematics, ETH Zurich, Zurich, Switzerland.
Email: {vera.traub@ifor.math.ethz.ch}.
Supported by Swiss National Science Foundation grant 200021\_184622.
}}
\date{}
\begin{document}

 \maketitle

\begin{abstract}
 The $T$-tour problem is a natural generalization of TSP and Path TSP.
 Given a graph $G=(V,E)$, edge cost $c: E \to \mathbb{R}_{\ge 0}$, and an even cardinality set $T\subseteq V$,
 we want to compute a minimum-cost $T$-join connecting all vertices of $G$ (and possibly containing parallel edges). 
 
 In this paper we give an $\frac{11}{7}$-approximation for the $T$-tour problem and show that the integrality ratio of the standard LP relaxation  is at most $\frac{11}{7}$.
 Despite much progress for the special case Path TSP, for general $T$-tours this is the first improvement 
 on Seb\H{o}'s analysis of the Best-of-Many-Christofides algorithm (Seb\H{o}~[2013]).
\end{abstract}

\section{Introduction}

The traveling salesman problem (TSP) is one of the most classical problems in combinatorial optimization.
Given a set $V$ of vertices and a metric $c$ on $V$, we want to find an order $v_1,\dots, v_n$ of the vertices in $V$ minimizing 
$c(v_n,v_1) +\sum_{i=2}^n c(v_{i-1},v_i)$.
Another definition of the TSP, which can be easily shown to be equivalent, is the following.
Given a connected graph $G=(V,E)$ and nonnegative edge costs $c : E \to \mathbb{R}_{\ge 0}$, find a minimum cost multi-subset $F$ of $E$ such that 
$(V,F)$ is connected and Eulerian, i.e. every vertex has even degree.
For many years best known approximation algorithm for the TSP was the classical $\frac{3}{2}$-approximation algorithm due to Christofides~\cite{Chr76} and Serdjukov~\cite{Ser78}.
Only very recently this approximation ratio was improved to $\frac{3}{2}-\epsilon$ for some small $\epsilon > 0$ by Karlin, Klein, and Oveis Gharan~\cite{KarKOG20}.

One important variant of the TSP is the Path TSP.
Besides a set $V$ of vertices and a metric~$c$ on $V$, we are given a start-vertex $s\in V$ and an end-vertex $t\in V$.
The task is to find an order $s=v_1,\dots, v_n=t$ of the vertices in $V$ minimizing $\sum_{i=2}^n c(v_{i-1},v_i)$.
As for the TSP we can also formulate the Path TSP as a graph problem:
given a connected graph $G=(V,E)$, vertices $s,t\in V$ ($s\ne t$), and nonnegative edge costs $c : E \to \mathbb{R}_{\ge 0}$, find a minimum cost multi-subset $F$ of $E$ such that 
$(V,F)$ is connected and the set $\odd(F)$ of odd-degree vertices in $(V,F)$ contains precisely the vertices $s$ and $t$.
In other words, $s$ and $t$ have odd degree, while all other vertices have even degree.

Christofides' algorithm for the TSP can be generalized to the path version, but then it has an approximation ratio of only $\frac{5}{3}$
as shown by Hoogeveen~\cite{Hoo91}.
However in contrast to TSP, for the path version we do know better approximation algorithms than Christofides' algorithm.
The first such approximation algorithm was given by An, Kleinberg, and Shmoys~\cite{AnKS15}, who proposed and analyzed the 
Best-of-Many-Christofides algorithm, which we will discuss in more detail later in this paper.
Subsequently, there has been a line of work~\cite{Seb13,Vyg16,GotV16,SebvZ16,TraV18,Zen18} improving the approximation ratio further.
Moreover, there is a black-box reduction from the path version to TSP~\cite{TraVZ20}:
if there is an $\alpha$-approximation algorithm for TSP, there also is an $(\alpha +\epsilon)$-approximation algorithm for the path version, for any fixed $\epsilon > 0$.
Combining this black-box reduction with their new approximation algorithm for TSP, Karlin, Klein, and Oveis Gharan~\cite{KarKOG20} obtain a $(\frac{3}{2}-\epsilon)$-approximation algorithm for Path TSP for some small $\epsilon > 0$. This is the currently best-known approximation ratio for Path TSP.

In this paper we study the $T$-tour problem which is a natural generalization of TSP and its path version.
An instance consists of a connected graph $G=(V,E)$, a set $T\subseteq V$ with $|T|$ even, and nonnegative edge costs $c : E \to \mathbb{R}_{\ge 0}$. 
The task is to compute a a minimum cost multi-subset $F$ of $E$ such that $(V,F)$ is connected and $\odd(F)=T$, i.e.\ vertices in $T$ have odd degree and vertices in $V\setminus T$ have even degree.
In other words, $F$ is a $T$-join (with possibly parallel edges) and connects all vertices of $G$.
The TSP is the special case $T=\emptyset$ and the Path TSP is the special case $|T|=2$.

However, many of the results for the Path TSP do not generalize to the $T$-tour problem.
Cheriyan, Friggstad, and Gao \cite{CheFG15} extended the Best-of-Many-Christofides algorithm by An, Kleinberg, and Shmoys~\cite{AnKS15}
from $|T|= 2$ to general $T$ and proved an approximation ratio of $\frac{13}{8}$, which is slightly worse than the ratio $\frac{1+\sqrt{5}}{2}$ obtained in~\cite{AnKS15} for $|T|= 2$.
Then Seb\H{o} \cite{Seb13} improved the analysis of the same algorithm and showed that the Best-of-Many-Christofides algorithm 
yields an $\frac{8}{5}$-approximation for the $T$-tour problem, which was also an improvement for the Path TSP.
Despite much further progress regarding the approximability of Path TSP, 
this result is the best previously known approximation ratio for the $T$-tour problem.

The results from \cite{GotV16,SebvZ16,TraV19,Zho20} apply only to Path TSP, i.e.\ the case $|T|=2$.
The reason for this is that they all rely on a structural theorem by Gottschalk and Vygen~\cite{GotV16}, which
cannot be extended for the case $|T|\ge 4$ as also shown in~\cite{GotV16}.

Some of the results for the Path TSP that are based on a dynamic programming technique~\cite{TraV18,Zen18,TraVZ20}
can be extended to the $T$-tour problem with $|T|$ constant, but not to the general case.
This yields a $(\frac{3}{2} - \epsilon)$-approximation algorithm for some small $\epsilon > 0$ and $|T|$ constant.
For unit-weight graphs, i.e. the special case where $c(e) =1$ for every edge $e\in E$, 
a $\frac{3}{2}$-approximation algorithm is known for general $|T|$ \cite{SebV14}.

In this paper we give the first improvement of the approximation guarantee for the general $T$-tour problem over Seb\H{o}'s 
$\frac{8}{5}$-approximation algorithm \cite{Seb13}. 
Our main result is an $\frac{11}{7}$-approximation algorithm.
We analyze the algorithm with respect to the standard LP relaxation (see~\eqref{eq:t_tour_lp} in Section~\ref{sect:preliminaries})
and we therefore also prove an upper bound of $\frac{11}{7}$ on the integrality ratio of this relaxation.

\section{Best-of-Many-Christofides and lonely edge deletion}\label{sect:preliminaries}
We will analyze our algorithm with respect to the following LP relaxation.
  \begin{equation}\label{eq:t_tour_lp}
  \begin{aligned}
   \min c(x)  \\
   s.t. & & x(\delta(U))\ge&\ 2 & &\text{ for }\emptyset\neq U\subsetneq V\text{ with }|T \cap U|\text{ even} \\
   & &  x(\delta(\Wscr)) \ge&\ |\Wscr| - 1 & & \text{ for every partition }\Wscr\text{ of }V \\
   & & x_e \ge & \ 0 & &\text{ for } e\in E,
  \end{aligned}
 \end{equation}
 where $\delta(U)$ is the set of edges with exactly one endpoint in $U$ and 
 $\delta(\Wscr)$ denotes the set of edges with endpoints in different elements of the partition $\Wscr$.
 The LP \eqref{eq:t_tour_lp} can be solved in polynomial time using the ellipsoid method.
 Barahona and Conforti \cite{BarC87} show that one can separate the even cut constraints 
 $x(\delta(U))\ge 2$ for $\emptyset\neq U\subsetneq V$ with $|T\cap U|$ even in polynomial time.
 The other constraints define the connector polyhedron
 \begin{equation}\label{eq:conncetor_polyhedron}
  \left\{ x \in \mathbb{R}^E_{\ge 0} : x(\delta(\Wscr)) \ge |\Wscr| - 1 \text{ for every partition }\Wscr\text{ of  }V\right\},
 \end{equation}
 which is the convex hull of all multi-subsets $F$ of $E$ for which $(V,F)$ is connected. (See e.g.\ Section~50.5 in~\cite{Sch03}.)
 Since one can optimize over \eqref{eq:conncetor_polyhedron} in polynomial time, one can also separate its constraints in polynomial time.
 
 Let $x^*$ be an optimum solution to~\eqref{eq:t_tour_lp}.
 In the following we denote by $\Sscr$ the set of all edge sets of spanning trees of our given graph $G$. 
 Since the constraints of~\eqref{eq:t_tour_lp} imply that $x^*$ is contained in the connector polyhedron \eqref{eq:conncetor_polyhedron},
 the vector $x^*$ dominates a convex combination of incidence vectors of spanning trees, i.e.\ there are coefficients
 $p_S \ge 0$ for $S\in \Sscr$ such that $\sum_{S\in \Sscr} p_S = 1$ and 
 \begin{equation}\label{eq:lp_sol_dominates_convex_comb}
  x^* \ge \sum_{S\in \Sscr} p_S \cdot \chi^S.
 \end{equation}

To prove our main result, we will use two different algorithms and bound the cost of the better of the two resulting $T$-tours.
One of these two algorithms is the Best-of-Many-Christofides algorithm, proposed by An, Kleinberg, and Shmoys~\cite{AnKS15} for the Path TSP and extended to $T$-tours by Cheriyan, Friggstad, and Gao~\cite{CheFG15}.
The algorithm proceeds as follows.

\medskip
\begin{algorithm2e}[H]
\vspace*{1mm}
\begin{enumerate}\itemsep0pt
\item Compute an optimum solution $x^*$ to the LP~\eqref{eq:t_tour_lp}.
\item Find a convex combination \eqref{eq:lp_sol_dominates_convex_comb} of spanning trees dominated by $x^*$.
\item For every $S\in \Sscr$ with $p_S > 0$, compute a cheapest $(\odd(S)\symdiff T)$-join $J_S^*$.
\item Return the cheapest of the resulting $T$-tours $S\cupp J_S^*$.
\end{enumerate}
\vspace*{-2mm}
\caption{Best-of-Many-Christofides \label{algo:bomc}
}
\end{algorithm2e}
\medskip

The cost of the tour computed by Algorithm~\ref{algo:bomc} is
\begin{align*}
 \min_{S\in \Sscr: p_S >0} \left(c(S) + c(J_S^*)\right) \le\ & \sum_{S \in \Sscr} p_S \cdot \left(c(S) + c(J_S^*)\right) 
                                                      \le c(x^*) + \sum_{S \in \Sscr} p_S \cdot c(J_S^*).
\end{align*}
To bound the cost of the $(\odd(S)\symdiff T)$-join $J_S^*$, both \cite{AnKS15} and \cite{Seb13} follow Wolsey's analysis~\cite{Wol80} of 
Christofides' algorithm for TSP. 
Since every vector contained in the $(\odd(S)\symdiff T)$-join polyhedron
\begin{equation}\label{eq:T_join_polyhedron}
  \Big\{ y\in \mathbb{R}_{\ge 0}^E : y(\delta(U)) \ge 1 \text{ for every } U\text{ with } |U\cap(\odd(S)\symdiff T)|\text{ odd} \Big\}
\end{equation}
dominates a convex combination of incidence vectors of $(\odd(S)\symdiff T)$-joins
\cite{EdmJ73}, we have $c(J_S^*) \le c(y^S)$ for every vector $y^S$ in \eqref{eq:T_join_polyhedron}.
We call a vector in \eqref{eq:T_join_polyhedron} a \emph{parity correction vector}.
The main difficulty in the analysis of the Best-of-Many-Christofides algorithm is to construct a cheap parity correction vector.

If $T=\emptyset$ (which is the special case TSP), the vector $\frac{1}{2}x^*$ is a feasible parity correction vector.
To see this, note that in this case $|T\cap U|$ is even for every $\emptyset\neq U\subsetneq V$ and hence $x^*(C)\ge 2$
for every cut $C$.
However, for $T\ne \emptyset$ this is not necessarily the case.
We call a cut $C$ with $x^*(C) < 2$ \emph{narrow} and denote by
\[ \Nscr := \{ \delta(U) : x^*(\delta(U)) < 2 \} \]
the set of \emph{narrow cuts}.
By the constraints of~\eqref{eq:t_tour_lp}, every narrow cut $C$ is a $T$-cut, i.e.\ $C=\delta(U)$ for some $U\subset V$ with $|U\cap T|$ odd.

Recall that for a parity correction vector $y$ we require $y(C)\ge 1$ only if $C$ is an $(\odd(S)\symdiff T)$-cut.
Let $U\subseteq V$ such that $\delta(U)\in \Nscr$ is a narrow cut. 
Then $|T\cap U|$ is odd.
Hence, $|U \cap (\odd(S)\symdiff T)|$ is odd if and only if $|U\cap \odd(S)|$ is even.
This is the case if and only if $|\delta(U) \cap S|$ is even.
In particular, a narrow cut $C$ with $|S\cap C|=2$ is an $(\odd(S)\symdiff T)$-cut, while a narrow cut $C$ with $|S\cap C|=1$ is not.

If $|S \cap C| = 1$ for a narrow cut $C$, we say that $C$ is \emph{lonely} for $S$.
Then we also say that the unique edge $e \in C\cap S$ is \emph{lonely} at $C$.
Lonely cuts and edges play a special role in Seb\H{o}'s \cite{Seb13} analysis of the Best-of-Many-Christofides algorithm in two ways.
First, the lonely cuts of a tree $(V,S)$ are important since they don't need parity correction, meaning that they are no $(\odd(S)\symdiff T)$-cuts.
Second, the incidence vectors $\chi^e$ of lonely edges are used to construct cheap parity correction vectors.
Here, the vector $\chi^e$ for an edge $e$ that is lonely in a tree $(V,S)$ is used to construct the parity correction vectors $y^{S'}$ for  other trees $(V,S')$. 
(See Section~\ref{sect:bomc} for more details.)

Besides the Best-of-Many-Christofides algorithm we will analyze another algorithm for the $T$-tour problem to prove our main result.
This algorithm builds on an algorithm by Seb\H{o} and van~Zuylen~\cite{SebvZ16} for the case $|T|=2$.
They start with a particular convex combination of incidence vectors of spanning trees, which was shown to exist by Gottschalk and Vygen~\cite{GotV16} 
(see also \cite{SchXX18}).
For every spanning tree $S$ contributing to the convex combination, they now delete some edges to obtain a forest $F_S$.
Then they compute an $(\odd(F_S) \symdiff T)$-join $J_S^*$.
Now $\odd(F_S\cupp J_S^*) =T$, but $(V, F_S\cupp J_S^*)$ might be disconnected. 
Therefore, they finally reconnect by adding two copies of some edges to obtain a $T$-tour. 

Seb\H{o} and van Zuylen~\cite{SebvZ16} show that on average they save more by deleting edges than they need to pay in the final reconnection step.
The reason why their result does not carry over to general $T$-tours is that their analysis crucially relies on the structure of the convex combination of incidence vectors of spanning trees which one cannot achieve for $|T| \ge 4$ (see \cite{GotV16}).    
     
\section{Outline of our approach}\label{sect:our_contribution}

Let us now explain our new approximation algorithm for the $T$-tour problem and outline its analysis.
We proceed as in the algorithm by Seb\H{o} and van Zuylen~\cite{SebvZ16} for the case $|T|=2$, but we start with an 
arbitrary convex combination \eqref{eq:lp_sol_dominates_convex_comb} instead of one with additional structure.
In the following we denote by $\Lscr_S$ the set of lonely cuts of a spanning tree $(V,S)$, 
i.e.\ the set of narrow cuts $C\in \Nscr$ with $|S\cap C|=1$.
Moreover, $L_S$ denotes the set of lonely edges.

\medskip
\begin{algorithm2e}[H]
\vspace*{1mm}
\begin{enumerate}\itemsep0pt
\item Compute an optimum solution $x^*$ to the LP~\eqref{eq:t_tour_lp}.
\item Find a convex combination \eqref{eq:lp_sol_dominates_convex_comb} of spanning trees dominated by $x^*$
      and \\ compute the set $\Nscr$ of narrow cuts. \label{item:narrow_cut_computation}
\item For every $S\in \Sscr$ with $p_S > 0$, let $L_S = \bigcup_{C\in \Lscr_S} (S\cap C)$ and 
      $F_S := S\setminus L_S$. \\
      Compute an $(\odd(F_S)\symdiff T)$-join $J_S^*$ with minimum $c^S(J_S^*)$, where for an edge $e$
      \[
       c^S(e) := c(e) +  2 \cdot 
                       \left(\sum_{C\in\Lscr_S : e\in C} c(S\cap C) - \max_{C\in\Lscr_S : e\in C} c(S\cap C) \right),
      \]
      where $\max \emptyset := 0$.
\item Compute a cheapest set $R_S$ of edges such that $(V,F_S \cup J_S^* \cup R_S)$ is connected.
\item Return the cheapest of the resulting $T$-tours $S\cupp J_S^* \cupp R_S \cupp R_S$.
\end{enumerate}
\vspace*{-2mm}
\caption{Best-of-Many-Christofides with lonely edge deletion \label{algo:deleting_edges}
}
\end{algorithm2e}
In order to implement step~\ref{item:narrow_cut_computation} in polynomial time one can use any polynomial-time algorithm for enumerating all near-minimum cuts~\cite{Kar00,KarS96,NagNI97,VazY92}.

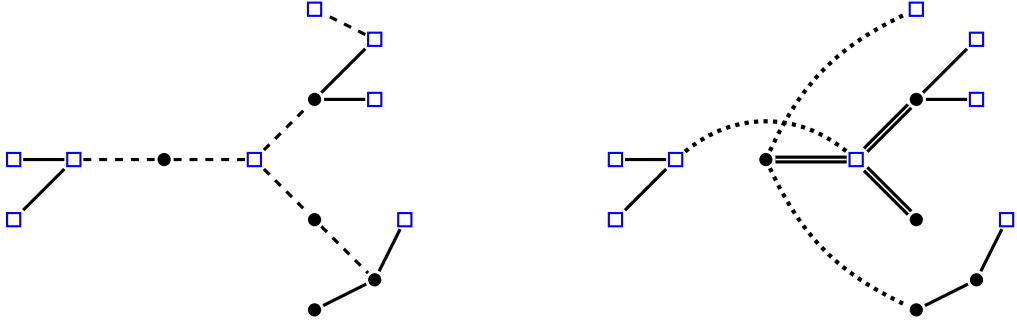
\begin{figure}
\begin{center}
\begin{tikzpicture}[scale=0.8]
\tikzset{nsT/.style={
draw=blue, thick,rectangle,inner sep=0em,minimum size=5pt,fill=white, outer sep=0.1em
}}
\tikzset{ns/.style={
fill=black,circle,inner sep=0em,minimum size=5pt, outer sep=0.1em}
} 
\tikzstyle{two}=[double,very thick]

\begin{scope}[every node/.style=nsT, shift={(9.5,5.5)}]
\node (v1) at (-1,0) {};
\node (v2) at (-1,-1) {};
\node (v3) at (0,0) {};
\node (v5) at (3,0) {};
\node (v7) at (5,1) {};
\node (v8) at (5,2) {};
\node (v9) at (4,2.5) {};
\node (v11) at (5.5,-1) {};
\end{scope}

\begin{scope}[every node/.style=ns, shift={(9.5,5.5)}]
\node (v4) at (1.5,0) {};
\node (v6) at (4,1) {};
\node (v10) at (4,-1) {};
\node (v12) at (5,-2) {};
\node (v13) at (4,-2.5) {};
\end{scope}

\begin{scope}[black,very thick]
\draw (v1) --(v3) --(v2);
\draw (v7) -- (v6) --(v8);
\draw (v11) --(v12) --(v13);
\end{scope}

\begin{scope}[black, dashed,very thick]
\draw (v3) -- (v4);
\draw (v4) -- (v5);
\draw (v5) -- (v6);
\draw (v5) -- (v10);
\draw (v8) -- (v9);
\draw (v10) -- (v12);
\end{scope}

\begin{scope}[shift={(10,0)}]
\begin{scope}[every node/.style=nsT, shift={(9.5,5.5)}]
\node (v1) at (-1,0) {};
\node (v2) at (-1,-1) {};
\node (v3) at (0,0) {};
\node (v5) at (3,0) {};
\node (v7) at (5,1) {};
\node (v8) at (5,2) {};
\node (v9) at (4,2.5) {};
\node (v11) at (5.5,-1) {};
\end{scope}

\begin{scope}[every node/.style=ns, shift={(9.5,5.5)}]
\node (v4) at (1.5,0) {};
\node (v6) at (4,1) {};
\node (v10) at (4,-1) {};
\node (v12) at (5,-2) {};
\node (v13) at (4,-2.5) {};
\end{scope}

\begin{scope}[black,very thick]
\draw (v1) --(v3) --(v2);
\draw (v7) -- (v6) --(v8);
\draw (v11) --(v12) --(v13);
\end{scope}

\begin{scope}
\draw[two] (v4) -- (v5);
\draw[two] (v5) -- (v6);
\draw[two] (v5) -- (v10);
\end{scope}

\begin{scope}[black, dotted, ultra thick]
\draw[bend right=20] (v4) to (v13);
\draw[bend left=20] (v4) to (v9);
\draw[bend left=40] (v3) to (v5);
\end{scope}

\end{scope}

\end{tikzpicture}
\end{center}
\caption{The left picture shows a spanning tree $(V,S)$, where the edges in $F_S$ are solid and the lonely edges are dashed.
The squares with white interior represent the set $T$ and the filled circles represent $V\setminus T$.
The right picture shows a $T$-tour resulting from this tree.
It consists of the forest $F_S$, an $(\odd(F_S)\symdiff T)$-join $J_S$ (dotted), and $2 R_S$ (the doubled edges).
For every edge $\{v,w\}\in J_S$, the set $R_S$ contains two copies of all but one of the lonely edges of the $v$-$w$-path in $(V,S)$.
\label{fig:illustration_of_reconnect}}
\end{figure}

The cost function $c^S$ is chosen to anticipate the cost for reconnection.
More precisely, we can observe the following, which is shown in~\cite{SebvZ16} for $|T|=2$. See Figure~\ref{fig:illustration_of_reconnect}.
\begin{lemma}
 In Algorithm~\ref{algo:deleting_edges} we have for every $S\in \Sscr$ with $p_S > 0$
 \[ 
     c(J_S^*) + 2 \cdot c(R_S) \le c^S(J_S^*). 
 \]
\end{lemma}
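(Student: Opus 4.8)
The plan is to exhibit one explicit reconnection set $R\subseteq L_S$ for which $(V,F_S\cupp J_S^*\cupp R)$ is already connected and whose cost satisfies $c(R)\le \tfrac12\bigl(c^S(J_S^*)-c(J_S^*)\bigr)$; since $R_S$ is a \emph{cheapest} edge set making $F_S\cup J_S^*\cup R_S$ connected, this yields $c(R_S)\le c(R)$ and hence the lemma. To set up the bookkeeping, note that each lonely cut $C\in\Lscr_S$ contains exactly one edge of $S$, which I call $\ell_C$ (so $c(S\cap C)=c(\ell_C)$), and that for a join edge $e=\{v,w\}$ the lonely cuts with $e\in C$ are precisely those whose unique tree edge lies on the $v$-$w$-path $P_e$ in $(V,S)$: the path $P_e$ crosses $C$ iff $v,w$ lie on opposite sides of $C$, i.e.\ iff $e\in C$, and then $P_e$ must use the unique tree edge $\ell_C$ of that cut. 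Writing $A_e:=\{\ell_C:C\in\Lscr_S,\ e\in C\}$ for the lonely edges on $P_e$ and fixing a maximum-cost element $m_e\in A_e$ (breaking ties arbitrarily), the definition of $c^S$ gives $c^S(J_S^*)=c(J_S^*)+2\sum_{e\in J_S^*}\bigl(c(A_e)-c(m_e)\bigr)$, so it suffices to produce a connected reconnection of cost at most $\sum_{e\in J_S^*}\bigl(c(A_e)-c(m_e)\bigr)$.

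For the reconnection I would take $R:=\bigcup_{e\in J_S^*}\bigl(A_e\setminus\{m_e\}\bigr)$, i.e.\ for every join edge I keep all lonely edges on its tree path except one of maximum cost. Since $R$ is a \emph{union} of sets, $c(R)\le\sum_{e\in J_S^*}\sum_{\ell\in A_e\setminus\{m_e\}}c(\ell)=\sum_{e\in J_S^*}\bigl(c(A_e)-c(m_e)\bigr)$, which is exactly the target bound. The intuition for connectivity is that for a single join edge $e$ the path $P_e$ runs through a chain of components of $F_S$ linked by the lonely edges $A_e$, and $e$ together with all but one of these lonely edges spans that chain; the difficulty is to show that over all join edges these pieces glue together into a single connected graph.

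The hard part is exactly this global connectivity, and the key is the parity structure of the narrow cuts. I would argue by contradiction: if $F_S\cupp J_S^*\cupp R$ were disconnected, take one side $W$ of a violated cut. Since $F_S$-edges stay inside components of $F_S$, the set $W$ is a union of components of $F_S$, no join edge crosses $\delta(W)$, and no edge of $R$ crosses $\delta(W)$. As $S$ is a spanning tree, some tree edge crosses $\delta(W)$, and because $F_S\cap\delta(W)=\emptyset$ every such edge is lonely; fix one, say $\ell_C$. Because $R$ avoids $\delta(W)$ we have $\ell_C\notin R$, which by the definition of $R$ forces $\ell_C=m_e$ for every join edge $e$ with $\ell_C\in A_e$. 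Now comes the decisive point: $C=\delta(U_C)$ is narrow, hence a $T$-cut with $|U_C\cap T|$ odd, and $|F_S\cap C|=0$ is even, so $|U_C\cap(\odd(F_S)\symdiff T)|$ is odd and $J_S^*$ crosses $C$ an odd number of times. In particular some join edge $e$ crosses $C$, so $\ell_C\in A_e$ and therefore $\ell_C=m_e$.

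This edge $e$ does not cross $\delta(W)$ (no join edge does), so its two endpoints lie on the same side of $W$, and hence the path $P_e$ crosses $\delta(W)$ an even number of times. Since $\ell_C\in A_e\cap\delta(W)$ already accounts for one crossing, and every edge of $P_e$ meeting $\delta(W)$ is a lonely edge (again using $F_S\cap\delta(W)=\emptyset$), the path $P_e$ contains a second lonely edge $\ell_{C'}\in A_e\cap\delta(W)$ with $\ell_{C'}\neq\ell_C=m_e$. But then $\ell_{C'}\in A_e\setminus\{m_e\}\subseteq R$ crosses $\delta(W)$, contradicting the choice of $W$. Hence no violated cut exists, $F_S\cupp J_S^*\cupp R$ is connected, and combining $c(R_S)\le c(R)\le\sum_{e}\bigl(c(A_e)-c(m_e)\bigr)$ with the expansion of $c^S(J_S^*)$ gives $c(J_S^*)+2\,c(R_S)\le c^S(J_S^*)$, as required.
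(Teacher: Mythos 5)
Your proof is correct, and the construction at its heart is the same as the paper's: for each join edge $e$ you keep all lonely edges on its tree path except one of maximum cost, which is exactly the set $\bigcup_{e\in J_S^*} R'_e$ the paper uses, and your cost accounting ($c(R)\le\sum_e(c(A_e)-c(m_e))$ via the union bound, compared against $c^S(J_S^*)-c(J_S^*)=2\sum_e(c(A_e)-c(m_e))$) coincides with the paper's. Where you genuinely diverge is the connectivity argument. The paper argues directly: every lonely cut is an $(\odd(F_S)\symdiff T)$-cut and hence meets $J_S^*$, so every lonely edge lies in some $R_e$ and $S\subseteq F_S\cup\bigcup_e R_e$; moreover $F_S\cup R_e\cup\{e\}$ contains the fundamental cycle of $e$, so removing the single edge $m_e$ from that cycle still leaves the endpoints of every edge of $P_e$ connected, and the connectivity of $S$ is inherited. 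You instead argue by contradiction through a violated cut $\delta(W)$, using two parity observations: the fundamental cut $C$ of a lonely edge in $\delta(W)$ satisfies $|J_S^*\cap C|$ odd (same fact the paper uses, deployed differently), and the tree path $P_e$ of a non-crossing join edge meets $\delta(W)$ an even number of times, forcing a second lonely edge of $A_e\setminus\{m_e\}\subseteq R$ into $\delta(W)$. Both routes are sound; the paper's direct cycle argument is shorter, while your cut-based argument makes explicit why deleting one $m_e$ per join edge can never disconnect the graph, at the price of the extra even-crossing step. Either way the lemma follows.
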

\begin{proof}
 Since $J_S^*$ is an $(\odd(F_S) \symdiff T)$-join and every lonely cut of $S$ is an $(\odd(F_S) \symdiff T)$-cut,
 we have $|C\cap J_S^*| \ge 1$ for all $C\in \Lscr_S$.
 For an edge $e\in J_S^*$ let $R_e := \bigcup_{C \in \Lscr_S: e\in \Lscr_S} (C\cap S)$.
 Then $S \subseteq F_S \cup \bigcup_{e\in J_S^*} R_e$.
 
 For $e\in J_S^*$ let $R'_e$ result from $R_e$ by removing its most expensive element.
 Consider a lonely edge $l$ in the unique cycle in $S \cup \{e\}$.
 Then there is a unique (lonely) cut $C$ with $\{l\} = C \cap S$.
 For this cut $C$ we have $e\in C$, implying $l\in R_e$.
 Therefore, $F_S \cup R_e \cup \{e\}$ contains a cycle. 
 
 This implies that $F_S \cup J_S^* \cup \bigcup_{e\in J_S^*} R'_e$ is connected.
 Hence,
 \[
    c(J_S^*) + 2 \cdot c(R_S) \le c(J_S^*) + 2 \cdot \sum_{e\in J_S^*} c(R'_e) = c^S(J_S^*). \qedhere
 \]
\end{proof}

As in \cite{SebvZ16} we construct a vector $\bar y^S$ in the $(\odd(F_S)\symdiff T)$-join polyhedron
to bound the cost $c^S(J_S^*)$ of parity correction and reconnection.
However, we now have the following difficulty.
Like Seb\H{o}~\cite{Seb13} and Gottschalk and Vygen~\cite{GotV16},
 Seb\H{o} and van Zuylen~\cite{SebvZ16} also use the incidence vector $\chi^e$ of a lonely edge of tree $S$ to construct 
the parity correction vectors $\bar y^{S'}$ for other trees $S'$.
To bound the resulting expected reconnection cost $c^{S'}(e) - c(e)$, Seb\H{o} and van Zuylen exploit the particular structure of the convex combination they work with.
Without this structure we cannot give a sufficiently good bound anymore and hence we will not use 
the incidence vectors of lonely edges to construct parity correction vectors for other trees.

Instead, our parity correction vector $\bar y^S$ will consist only of a fraction of $x^*$ and incidence vectors of edges of the tree $S$ itself.
This will allow us to control the reconnection cost: to bound the cost $c^S(x^*)- c(x^*)$ we can generalize an argument from~\cite{SebvZ16} (see Lemma~\ref{lemma:reconnection_cost}) and for an edge $e\in S$ we have $c^S(e)=c(e)$.
Since using a parity correction vector consisting not only of a fraction of $x^*$ and incident vectors of edges of the tree $S$ itself
was crucial in all of the improvements~\cite{AnKS15,Seb13,Vyg16,GotV16,SebvZ16} over Christofides' algorithm for $|T|=2$,
we need new insights to obtain a good bound anyways.

The key idea is that the deletion of lonely edges can help parity correction in the following sense.
If a narrow cut $C$ contains exactly two edges of a tree $(V,S)$, then this cut is an $(\odd(S)\symdiff T)$-cut.
Now suppose one of the two edges in $C\cap S$ is a lonely edge of $S$.
Then $|F_S \cap C|=1$ and hence $C$ is not an $(\odd(F_S)\symdiff T)$-cut.
(Note that one can show that it is impossible that both edges in $C\cap S$ are lonely edges of $S$.)

Of course, it might happen that a narrow cut $C$ with $|C\cap S|=2$ does not contain a lonely edge.
However, in this case Seb\H{o}'s analysis of the Best-of-Many-Christofides algorithm is not tight as we will show in Section~\ref{sect:bomc}.
The overall approximation algorithm that we analyze is the following.
Apply Algorithm~\ref{algo:bomc} and Algorithm~\ref{algo:deleting_edges} and return the cheaper of the two resulting $T$-tours.

\section{Analyzing the Best-of-Many Christofides algorithm}\label{sect:bomc}

In this section we present the details of the analysis of the Best-of-Many-Christofides algorithm from~\cite{Seb13}, which builds on~\cite{AnKS15}. 
We include this analysis here for completeness.
Moreover, while analyzing the Best-of-Many-Christofides algorithm, we will introduce some notation and useful facts that we also need later on.

For $S\in \Sscr$ we denote by $I_S$ the unique $T$-join contained in $S$.
Moreover, we define $J_S := S \setminus I_S$. 
Then $J_S$ is the unique $(T\symdiff \odd(S))$-join in $S$.
In the following we write $I_p := \sum_{S\in\Sscr} p_S \cdot \chi^{I_S}$ and $J_p := \sum_{S\in\Sscr} p_S \cdot \chi^{J_S}$.
Then we have \[ x^* = I_p + J_p. \]
The following is well-known.
\begin{lemma}\label{lemma:simple_bound}
 The Best-of-Many-Christofides algorithm (Algorithm~\ref{algo:bomc}) returns a solution to the $T$-tour problem of cost at most 
 \[ 
  c(x^*) + c(J_p).
 \]
\end{lemma}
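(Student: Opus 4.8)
The plan is to build directly on the cost decomposition already recorded just before the statement, namely that Algorithm~\ref{algo:bomc} returns a $T$-tour of cost at most
\[
 c(x^*) + \sum_{S\in\Sscr} p_S \cdot c(J_S^*).
\]
So it suffices to prove that $\sum_{S\in\Sscr} p_S \cdot c(J_S^*) \le c(J_p)$; the claimed bound $c(x^*)+c(J_p)$ then follows immediately.

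The key observation I would use is that for each tree $S$ the edge set $J_S = S\setminus I_S$ is \emph{itself} an $(\odd(S)\symdiff T)$-join. This is exactly the fact recalled at the start of the section: $J_S$ is the unique $(T\symdiff\odd(S))$-join contained in $S$, and $T\symdiff\odd(S)=\odd(S)\symdiff T$. Hence its incidence vector $\chi^{J_S}$ satisfies $\chi^{J_S}(\delta(U))\ge 1$ whenever $|U\cap(\odd(S)\symdiff T)|$ is odd, i.e.\ $\chi^{J_S}$ is a feasible parity correction vector lying in the polyhedron~\eqref{eq:T_join_polyhedron}. Since $J_S^*$ is chosen to be a \emph{cheapest} $(\odd(S)\symdiff T)$-join, I obtain for every $S$ with $p_S>0$ that
\[
 c(J_S^*) \le c(\chi^{J_S}) = c(J_S).
\]

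It then remains only to average over the convex combination. Multiplying by $p_S$ and summing over all trees, and using the definition $J_p=\sum_{S\in\Sscr} p_S\cdot\chi^{J_S}$ together with linearity of $c$, gives
\[
 \sum_{S\in\Sscr} p_S \cdot c(J_S^*) \;\le\; \sum_{S\in\Sscr} p_S\cdot c(J_S) \;=\; c\Big(\sum_{S\in\Sscr} p_S\cdot\chi^{J_S}\Big) \;=\; c(J_p).
\]
Combined with the cost decomposition above, this yields the stated bound $c(x^*)+c(J_p)$.

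I do not expect a genuine obstacle here: once the right parity correction vector is identified, everything is bookkeeping via linearity and the optimality of $J_S^*$. The only point that needs care is verifying that $J_S$ really is an $(\odd(S)\symdiff T)$-join — equivalently, that using the tree's own odd-parity part as a (possibly suboptimal) join is feasible — but this is immediate from the definitions of $I_S$ and $J_S$. This is why the lemma is flagged as well-known.
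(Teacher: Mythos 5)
Your proposal is correct and follows essentially the same route as the paper: both bound the minimum by the convex combination $\sum_S p_S\,(c(S)+c(J_S^*))$, then use that $J_S$ is itself an $(\odd(S)\symdiff T)$-join so that optimality of $J_S^*$ gives $c(J_S^*)\le c(J_S)$, and conclude by linearity with $\sum_S p_S\, c(S)\le c(x^*)$ and $\sum_S p_S\, c(J_S)=c(J_p)$. The detour through the $(\odd(S)\symdiff T)$-join polyhedron is unnecessary (direct comparison of joins suffices), but harmless.
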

\begin{proof}
Recall that $J_S$ is an $(T\symdiff \odd(S))$-join in $S$ for every $S\in\Sscr$.
Hence, the cheapest of the $T$-tours $S \cupp J_S^*$ with $p_S > 0$ has cost 
\[ 
\min_{S\in\Sscr: p_S > 0} \left(c(S) + c(J_S^*)\right)\ \le\ \sum_{S\in \Sscr} p_S \left(c(S) + c(J_S^*)\right)\ \le\ \sum_{S\in \Sscr} p_S \left(c(S) + c(J_S)\right)\ =\ c(x^*) + c(J_p).
\]
\end{proof}
If this does not yields an approximation ratio of better than $\frac{8}{5}$, we have $c(J_p) \ge \frac{3}{5} c(x^*)$ and hence
$c(I_p) \le \frac{2}{5} c(x^*)$.
\smallskip 

One important observation by An, Kleinberg, and Shmoys~\cite{AnKS15}
is that for a narrow cut $C$ with $x^*(C)$ much smaller than $2$, a large fraction of the spanning trees will be lonely at $C$.
More precisely, by \eqref{eq:lp_sol_dominates_convex_comb} we have 
\begin{equation}\label{eq:many_lonely_cuts}
 x^*(C) \ge  2 -\sum_{S : C\in \Lscr_S} p_S
\end{equation}
for every narrow cut $C$ and hence 
\begin{equation}\label{eq:few_non_lonely_cuts}
 \sum_{S : C\in \Nscr\setminus \Lscr_S} p_S \le x^*(C) -1.
\end{equation}

Let now $(V,S)$ be a spanning tree. 
Seb\H{o}~\cite{Seb13} uses the following parity correction vector to bound the cost of a cheapest $(\odd(S)\symdiff T)$-join:
\[
 y^S := \frac{1}{2} x^* + \alpha \cdot \chi^{I_S} + \sum_{C\in \Nscr\setminus \Lscr_S} \max\{1-\tfrac{1}{2} x^*(C) - \alpha, 0 \} \cdot v^C, 
\]
where $\alpha \ge 0$ and
\[
 v^C := \frac{1}{2-x^*(C)} \cdot \sum_{S\in \Sscr: C\in \Lscr_S} p_S \cdot \chi^{S\cap C}.
\]
Note that the $T$-join $I_S$ intersects all narrow cuts because all narrow cuts are $T$-cuts.
Moreover, $v^C(C) \ge 1$ for every narrow cut $C$ because of~\eqref{eq:many_lonely_cuts}.

We now show that $y^S$ is indeed a parity correction vector for $S$, i.e.\ it is 
contained in the $(\odd(S) \symdiff T)$-join polyhedron.
\begin{lemma}\label{lemma:parity_correction_bomc}
 For every $(\odd(S) \symdiff T)$-cut $C$ we have $y^S(C) \ge 1$.
\end{lemma}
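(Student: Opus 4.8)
The plan is to verify the inequality $y^S(C) \ge 1$ by a case analysis on the $(\odd(S)\symdiff T)$-cut $C$, splitting according to whether $C$ is narrow and, if narrow, whether $C$ is lonely for $S$. The three terms in the definition of $y^S$ contribute as follows: $\frac{1}{2}x^*$ always contributes $\frac{1}{2}x^*(C)$; the term $\alpha\cdot\chi^{I_S}$ contributes $\alpha\cdot|I_S\cap C|$; and the correction sum contributes whenever $C$ coincides with some non-lonely narrow cut used in the index set. The first thing I would record is the key parity fact already established in the excerpt: a narrow cut is an $(\odd(S)\symdiff T)$-cut precisely when $|S\cap C|$ is even, equivalently when $C$ is \emph{not} lonely for $S$. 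So a lonely cut is never an $(\odd(S)\symdiff T)$-cut, and it suffices to consider $C\notin\Lscr_S$.

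First I would handle the case where $C$ is \emph{not} narrow, i.e. $x^*(C)\ge 2$. Then the very first term already gives $y^S(C)\ge\frac{1}{2}x^*(C)\ge 1$, and we are done since the other two terms are nonnegative. Next I would treat the main case: $C$ is an $(\odd(S)\symdiff T)$-cut that is narrow but not lonely for $S$, so $C\in\Nscr\setminus\Lscr_S$. Here I would use that $C$ appears as its own index in the correction sum, contributing at least $\max\{1-\frac{1}{2}x^*(C)-\alpha,0\}\cdot v^C(C)$. Since $v^C(C)\ge 1$ by \eqref{eq:many_lonely_cuts}, and since $I_S$ meets every $T$-cut (in particular every narrow cut) so that $|I_S\cap C|\ge 1$, I would lower-bound
\[
 y^S(C) \ \ge\ \tfrac{1}{2}x^*(C) + \alpha + \max\{1-\tfrac{1}{2}x^*(C)-\alpha,\,0\}.
\]
If the max is achieved by the first argument, the right-hand side telescopes exactly to $1$; if it is achieved by $0$, then $\frac{1}{2}x^*(C)+\alpha\ge 1$ already. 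Either way $y^S(C)\ge 1$.

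The delicate point, and the step I expect to be the main obstacle, is making sure that when I invoke $v^C(C)\ge 1$ I am using the correct cut and that the cross-terms (contributions of $v^{C'}$ for $C'\ne C$) are accounted for as nonnegative rather than accidentally double-counted or subtracted. I would argue that each $v^{C'}$ is a nonnegative vector, so all summands beyond the diagonal one only help; the inequality never relies on them. One subtlety worth a sentence is confirming that $|I_S\cap C|\ge 1$ genuinely holds for the narrow non-lonely cut in question: this follows because every narrow cut is a $T$-cut and $I_S$ is a $T$-join in $S$, hence $I_S$ crosses every $T$-cut an odd (in particular positive) number of times. With the not-narrow case, the narrow-non-lonely case, and the observation that narrow-lonely cuts need no correction, every $(\odd(S)\symdiff T)$-cut is covered, establishing $y^S(C)\ge 1$ and completing the proof.
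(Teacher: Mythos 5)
Your proof is correct and follows essentially the same case analysis as the paper's: $C$ not narrow (handled by $\tfrac{1}{2}x^*(C)\ge 1$), $C$ narrow and lonely (excluded because it is not an $(\odd(S)\symdiff T)$-cut), and $C$ narrow and non-lonely (handled via $v^C(C)\ge 1$ together with $|I_S\cap C|\ge 1$). One small imprecision: for a narrow cut, ``$|S\cap C|$ even'' is not equivalent to ``not lonely'' (consider $|S\cap C|=3$), but you only use the correct direction --- lonely cuts need no parity correction --- and you bound all non-lonely narrow cuts uniformly, so nothing breaks.
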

\begin{proof}
Let $C$ be a cut. 
If $C\notin \Nscr$, we have $y^S(C) \ge \frac{1}{2} x^*(C) \ge 1$.
Otherwise, the constraints of~\eqref{eq:t_tour_lp} imply that $C$ is a $T$-cut.

If $C\in \Lscr_S \subseteq \Nscr$, we have $|S\cap C|=1$ and hence $C$ is also an $\odd(S)$-cut.
Therefore $C$ is not an $(\odd(S) \symdiff T)$-cut.

Now consider the remaining case $C\in \Nscr\setminus \Lscr_S$. 
Since $C$ is a $T$-cut, we have $|I_S \cap C| \ge 1$. 
Using $v^C(C)\ge 1$, this implies
$
 y^S(C) \ge \frac{1}{2} x^*(C) + \alpha + \max\{1-\tfrac{1}{2} x^*(C) - \alpha, 0 \}  \ge 1.
$
\end{proof}
\begin{lemma}\label{lemma:bomc_bound}
Let $\alpha \ge 0$.
Then the Best-of-Many-Christofides algorithm (Algorithm~\ref{algo:bomc}) returns a solution to the $T$-tour problem of cost at most 
 \[ 
  \tfrac{3}{2} c(x^*) + \alpha \cdot c(I_p) + \sum_{C\in \Nscr} (x^*(C) -1) \cdot \max\{1-\tfrac{1}{2} x^*(C) - \alpha, 0 \} \cdot c(v^C).
 \]
\end{lemma}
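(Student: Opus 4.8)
The plan is to combine the cost bound for Best-of-Many-Christofides with the explicit parity correction vector $y^S$ and the bounds on the narrow-cut coefficients from \eqref{eq:few_non_lonely_cuts}. The starting point is the standard inequality established just before this lemma: the tour returned by Algorithm~\ref{algo:bomc} has cost at most $c(x^*) + \sum_{S\in\Sscr} p_S\cdot c(J_S^*)$. Since $J_S^*$ is a cheapest $(\odd(S)\symdiff T)$-join and $y^S$ is a feasible parity correction vector by Lemma~\ref{lemma:parity_correction_bomc}, we have $c(J_S^*)\le c(y^S)$ for every $S$ with $p_S>0$. Thus it suffices to bound the weighted average $\sum_{S\in\Sscr} p_S\cdot c(y^S)$.

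The next step is to expand $c(y^S)$ termwise using the definition
\[
 y^S = \tfrac12 x^* + \alpha\cdot\chi^{I_S} + \sum_{C\in\Nscr\setminus\Lscr_S}\max\{1-\tfrac12 x^*(C)-\alpha,0\}\cdot v^C,
\]
and then take the $p_S$-weighted sum over all $S$. The first term contributes $\tfrac12 c(x^*)$, which combines with the leading $c(x^*)$ to give $\tfrac32 c(x^*)$. The second term gives $\alpha\cdot\sum_S p_S\, c(\chi^{I_S}) = \alpha\cdot c(I_p)$, using $I_p=\sum_S p_S\chi^{I_S}$. The only delicate term is the third one, where the sum runs over $C\in\Nscr\setminus\Lscr_S$, i.e. over the narrow cuts that are \emph{not} lonely for $S$. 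The key manipulation is to swap the order of summation, writing
\[
 \sum_{S\in\Sscr} p_S \sum_{C\in\Nscr\setminus\Lscr_S}\max\{1-\tfrac12 x^*(C)-\alpha,0\}\cdot c(v^C)
 = \sum_{C\in\Nscr}\max\{1-\tfrac12 x^*(C)-\alpha,0\}\cdot c(v^C)\cdot\!\!\sum_{S:\,C\in\Nscr\setminus\Lscr_S}\!\! p_S.
\]
I would then apply the bound $\sum_{S:\,C\in\Nscr\setminus\Lscr_S} p_S \le x^*(C)-1$ from \eqref{eq:few_non_lonely_cuts}, noting that the coefficient $\max\{1-\tfrac12 x^*(C)-\alpha,0\}\cdot c(v^C)$ is nonnegative so the inequality is preserved. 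This yields exactly $\sum_{C\in\Nscr}(x^*(C)-1)\cdot\max\{1-\tfrac12 x^*(C)-\alpha,0\}\cdot c(v^C)$, completing the bound.

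The main obstacle to keep in mind is the interchange of summations and the careful bookkeeping over the index set $\Nscr\setminus\Lscr_S$: for a fixed cut $C$, the inner sum over $S$ must range precisely over those trees for which $C$ is narrow but not lonely, so that \eqref{eq:few_non_lonely_cuts} applies verbatim. One must also verify that $\max\{1-\tfrac12 x^*(C)-\alpha,0\}\ge 0$ and $c(v^C)\ge 0$ so that replacing the exact sum $\sum_{S:\,C\in\Nscr\setminus\Lscr_S} p_S$ by its upper bound $x^*(C)-1$ only increases the expression; both nonnegativity facts are immediate, the former from the definition of $\max$ and the latter since $c\ge 0$ and $v^C$ is a nonnegative combination of incidence vectors. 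No further estimates on $c(v^C)$ are needed at this stage — the lemma is stated with $c(v^C)$ left explicit — so the remainder is just collecting the three contributions.
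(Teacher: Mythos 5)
Your proposal is correct and follows essentially the same route as the paper: bound the minimum by the $p_S$-weighted average, use $\sum_S p_S\, c(S)\le c(x^*)$ and $c(J_S^*)\le c(y^S)$ (via Lemma~\ref{lemma:parity_correction_bomc} and the $T$-join polyhedron), expand $c(y^S)$ termwise, swap the order of summation over $S$ and $C$, and apply \eqref{eq:few_non_lonely_cuts} using the nonnegativity of $\max\{1-\tfrac12 x^*(C)-\alpha,0\}\cdot c(v^C)$. No gaps.
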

\begin{proof}
By Lemma~\ref{lemma:parity_correction_bomc}, Algorithm~\ref{algo:bomc} returns a $T$-tour of cost at most 
\begin{align*}
 &\min_{S\in \Sscr: p_S > 0} \left( c(S) +  c(y^S) \right) \\
 &\le\ \sum_{S\in\Sscr} p_S \cdot \left( c(S) +  c( y^S) \right) \\
 &\le\ \tfrac{3}{2} c(x^*) + \alpha \cdot c(I_p)
         + \sum_{C\in \Nscr} \sum_{S\in\Sscr : C\notin \Lscr_S}  p_S \cdot\max\{1-\tfrac{1}{2} x^*(C) - \alpha, 0 \} \cdot c(v^C)\\
 &\le\ \tfrac{3}{2} c(x^*) + \alpha \cdot c(I_p) + \sum_{C\in \Nscr} (x^*(C) -1) \cdot \max\{1-\tfrac{1}{2} x^*(C) - \alpha, 0 \} \cdot c(v^C),
\end{align*}
where we used~\eqref{eq:few_non_lonely_cuts} in the last inequality.
\end{proof}

Seb\H{o}~\cite{Seb13} now completes the analysis as follows.
If an edge $e$ of $S$ is lonely at a narrow cut $C$, we have $1 = |C\cap S| \ge |C \cap I_S| \ge 1$
(because $C$ is a $T$-cut and $I_S$ is a $T$-join) and hence $e\in I_S$.
This shows $L_S \subseteq I_S$.
Therefore, with $L_p := \sum_{S\in\Sscr} p_S \cdot \chi^{L_S}$, we have 
\[ \sum_{C\in \Nscr} (2-x^*(C)) \cdot v^C = L_p \le I_p.\]
Using this and $\frac{(x-1)\cdot(1-\frac{1}{2}x-\frac{1}{8})}{2-x}\le \frac{1}{8}$ for $1 \le x < 2$, 
one can show that Lemma~\ref{lemma:bomc_bound} for $\alpha = \frac{1}{8}$ implies an approximation ratio of $\frac{8}{5}$
if $c(I_p) \le \frac{2}{5} \cdot c(x^*)$.
Otherwise, Lemma~\ref{lemma:simple_bound} implies the approximation ratio $\frac{8}{5}$.
This analysis is only tight if $c(L_p) = c(I_p) = \frac{2}{5} c(x^*)$.

\section{Deleting lonely edges for parity correction}\label{sect:deleting_edges}

In this section we show the following.

\begin{lemma}\label{lemma:edge_deletion_bound}
Algorithm~\ref{algo:deleting_edges} returns a solution to the $T$-tour problem of cost at most 
  \[ 
 \frac{8}{5} c(x^*) + \frac{1}{5} c(I_p) - \frac{2}{5} c(L_p)- \frac{2}{5} \sum_{C\in \Nscr} (2-x^*(C))^2 \cdot c(v^C).
 \]
\end{lemma}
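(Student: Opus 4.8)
The plan is to follow the template of Section~\ref{sect:bomc}: for each tree $S$ with $p_S>0$ I exhibit one vector $\bar y^S$ in the $(\odd(F_S)\symdiff T)$-join polyhedron and charge the whole tour to it. By the lemma bounding $c(J_S^*)+2c(R_S)\le c^S(J_S^*)$ and the Edmonds--Johnson characterization (so $c^S(J_S^*)\le c^S(\bar y^S)$ for any parity correction vector), together with $c(F_S)=c(S)-c(L_S)$ and $\sum_{S}p_S\chi^S\le x^*$ from~\eqref{eq:lp_sol_dominates_convex_comb}, averaging the tour costs $c(F_S)+c(J_S^*)+2c(R_S)$ with weights $p_S$ gives
\[ \text{cost}\ \le\ c(x^*)-c(L_p)+\sum_{S\in\Sscr}p_S\,c^S(\bar y^S). \]
Comparing with the target and using the identity $\sum_{C\in\Nscr}(2-x^*(C))v^C=L_p$, which yields $\sum_{C\in\Nscr}(2-x^*(C))^2c(v^C)=\sum_{S}p_S\sum_{C\in\Lscr_S}(2-x^*(C))\,c(S\cap C)$, the whole statement reduces to constructing $\bar y^S$ with
\[ \sum_{S\in\Sscr}p_S\,c^S(\bar y^S)\ \le\ \tfrac35 c(x^*)+\tfrac15 c(I_p)+\tfrac15\sum_{S\in\Sscr}p_S\sum_{C\in\Lscr_S}(2x^*(C)-1)\,c(S\cap C). \]
I read the last term as a \emph{credit} of $\tfrac15(2x^*(C)-1)c(S\cap C)$ per lonely cut, stemming from having removed its lonely edge from $F_S$.

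Before choosing $\bar y^S$ I would record the facts that make a purely ``own-tree'' construction possible. Each tree edge is lonely at most at its fundamental cut, so for $e\in S$ the correction term in $c^S(e)$ vanishes and $c^S(e)=c(e)$; hence incidence vectors of edges of $S$ cost their ordinary value and are free to use. A narrow cut $C$ is an $(\odd(F_S)\symdiff T)$-cut exactly when $|F_S\cap C|$ is even, which partitions the relevant narrow cuts of $S$ into the lonely cuts $\Lscr_S$ (now $|F_S\cap C|=0$, so they \emph{do} need correction), the non-lonely cuts with $|S\cap C|=2$ containing a lonely edge (now $|F_S\cap C|=1$, so deletion removed the need for correction), and the non-lonely cuts with $|S\cap C|=2$ and no lonely edge (still needing correction); here one uses that two lonely edges cannot share a cut. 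I would also use $L_S\subseteq I_S$ and the bound $x^*(C)\ge 1$ for every cut, coming from the two-partition constraint of~\eqref{eq:t_tour_lp}.

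The vector $\bar y^S$ itself is assembled from a fraction of $x^*$ plus multiples of $\chi^{I_S}$ and of the single-edge vectors $\chi^{S\cap C}$ for $C\in\Lscr_S$, with no contribution from other trees. Since every narrow cut is a $T$-cut, $\tfrac12 x^*+\tfrac12\chi^{I_S}$ already satisfies every narrow cut, so the gain must come entirely from deletion: on each ``helped'' non-lonely cut the correction is simply dropped, and on each lonely cut I exploit $x^*(C)\ge 1$ together with the forest credit $-c(L_p)$ to correct $C$ at a net price matching the credit $\tfrac15(2x^*(C)-1)c(S\cap C)$. The $c^S$-cost of the $x^*$-fraction is $c(x^*)$ plus a reconnection term $c^S(x^*)-c(x^*)$, which I bound using the laminar structure of the fundamental (lonely) cuts of $S$ via the reconnection-cost estimate (Lemma~\ref{lemma:reconnection_cost}). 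Substituting the identity above and optimizing the free coefficients collapses the bound to the stated right-hand side.

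The hard part will be the narrow cuts with small $x^*(C)$. The unhelped non-lonely cuts still require $\bar y^S(C)\ge1$ but offer no lonely edge to exploit, so the difficulty of keeping the $c(I_p)$-coefficient down to $\tfrac15$ while remaining feasible and nonnegative concentrates there; correcting them by a multiple of $\chi^{S\cap C}$ (both edges) rather than of $\chi^{I_S}$ helps by shifting part of the cost onto $J_S$, but getting the arithmetic to close is delicate. This is exactly the regime in which the analysis of Algorithm~\ref{algo:bomc} is not tight, which is precisely why the overall algorithm returns the cheaper of the two tours. The second delicate point is the reconnection estimate: because a single edge may cross many lonely cuts of $S$, controlling $\sum_S p_S\big(c^S(x^*)-c(x^*)\big)$ against the lonely credit requires the laminarity of the lonely cuts and careful handling of the $\max$ in the definition of $c^S$.
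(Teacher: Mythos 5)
Your framework is the same as the paper's: charge the tour to $c(F_S)+c^S(\bar y^S)$ for an own-tree parity-correction vector $\bar y^S$, use Lemma~\ref{lemma:reconnection_cost} only on the $x^*$-part of $\bar y^S$ (since $c^S(e)=c(e)$ for $e\in S$), and translate between $c(L_p)$ and $\sum_{C\in\Nscr}(2-x^*(C))^2c(v^C)$ via $L_p=\sum_{C\in\Nscr}(2-x^*(C))v^C$. Your reduction to
\[
\sum_{S\in\Sscr}p_S\,c^S(\bar y^S)\ \le\ \tfrac35 c(x^*)+\tfrac15 c(I_p)+\tfrac15\sum_{S\in\Sscr}p_S\sum_{C\in\Lscr_S}\bigl(2x^*(C)-1\bigr)\,c(S\cap C)
\]
is arithmetically correct and is exactly the inequality the paper's vector achieves; the trichotomy of narrow cuts (lonely; $|S\cap C|=2$ with a lonely edge, hence repaired by deletion; $|S\cap C|=2$ with no lonely edge) is also the paper's.

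However, the heart of the lemma --- exhibiting $\bar y^S$ and verifying feasibility and cost --- is missing, and the family over which you propose to ``optimize the free coefficients'' does not obviously contain a workable vector. Two concrete problems. First, a cut $C=\delta(U)$ with $|U\cap T|$ even and $|F_S\cap C|$ odd is an $(\odd(F_S)\symdiff T)$-cut that may satisfy $I_S\cap C=\emptyset$, contain no lonely edge, and have $x^*(C)=2$; with only a fraction $\beta x^*$ plus multiples of $\chi^{I_S}$ and of $\chi^{S\cap C}$ for $C\in\Lscr_S$, feasibility forces $\beta\ge\tfrac12$, and then each lonely cut already costs at least $\bigl((x^*(C)-1)+\tfrac12(2-x^*(C))\bigr)c(S\cap C)=\tfrac12x^*(C)\,c(S\cap C)$, exceeding the available credit $\tfrac15(2x^*(C)-1)c(S\cap C)$. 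Second, your fallback of correcting each unhelped non-lonely narrow cut by a multiple of $\chi^{S\cap C}$ charges these cuts separately, but unlike the lonely cuts (which are fundamental cuts of distinct edges) they can share tree edges arbitrarily, so the total cost of such per-cut terms is not controlled by any quantity in the target. The paper resolves both points with two \emph{global} terms, taking $\bar y^S=\tfrac25x^*+\tfrac15\chi^{S}+\tfrac15\chi^{I_S\setminus L_S}+\sum_{C\in\Lscr_S}\tfrac25(2-x^*(C))\chi^{S\cap C}$: the term $\tfrac15\chi^S$ gives $\tfrac25x^*(C)+\tfrac15|S\cap C|\ge1$ on every non-narrow cut and every cut with $|S\cap C|\ge3$, and contributes $\tfrac25$ to narrow cuts with $|S\cap C|=2$, while $\tfrac15\chi^{I_S\setminus L_S}$ supplies the missing $\tfrac15$ there because such a cut is a $T$-cut containing no lonely edge and hence meets $I_S\setminus L_S$. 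Their cost $\tfrac15c(S)+\tfrac15c(I_S\setminus L_S)$ is precisely the origin of the $\tfrac15c(I_p)$ and part of the $\tfrac85c(x^*)-\tfrac25c(L_p)$ in the statement. Without such an explicit vector and its cut-by-cut verification, the proof is not complete.
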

Note that the bound in Lemma~\ref{lemma:edge_deletion_bound} is smaller than $\frac{8}{5}c(x^*)$ in the case where Seb\H{o}'s~\cite{Seb13} analysis of the Best-of-Many-Christofides algorithm is tight: then we have $c(L_p) = c(I_p) = \frac{2}{5} c(x^*)$.
\smallskip 

In the rest of this section we prove Lemma~\ref{lemma:edge_deletion_bound}.
Let $(V,S)$ be a spanning tree and $F_S := S \setminus L_S$.
The following lemma bounds the average cost for reconnection. 
It is essentially due to Seb\H{o} and van Zuylen~\cite{SebvZ16} who proved it for the Path TSP.
Their proof can be generalized to the $T$-tour problem as we show below.
\begin{lemma}\label{lemma:reconnection_cost}
 We have $c^S(x^*) - c(x^*) \le 2 \cdot \sum_{C\in \Lscr_S} (x^*(C) -1)\cdot c(S\cap C)$.
\end{lemma}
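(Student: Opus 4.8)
The plan is to expand $c^S(x^*) - c(x^*)$ directly from the definition of $c^S$, swap a summation, and reduce the claim to a single clean inequality about the value of $x^*$ on unions of lonely cuts; that inequality I then obtain from the connector (partition) constraints of~\eqref{eq:t_tour_lp} via a threshold decomposition.

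First I would set up the dictionary between lonely cuts and tree paths. For $C\in\Lscr_S$ we have $|S\cap C|=1$; write $S\cap C=\{\ell_C\}$, so $c(S\cap C)=c(\ell_C)$. Since deleting $\ell_C$ from the spanning tree $(V,S)$ separates the two shores of $C$, an edge $e$ lies in $C$ if and only if $\ell_C$ lies on the unique $S$-path $P_S(e)$ joining the endpoints of $e$; moreover distinct lonely cuts have distinct lonely edges. With this, and using $c^S(e)-c(e)=2\big(\sum_{C\in\Lscr_S:\,e\in C} c(\ell_C)-\max_{C\in\Lscr_S:\,e\in C} c(\ell_C)\big)$, I would write
\[
 c^S(x^*)-c(x^*)=2\sum_e x^*_e\Bigl(\sum_{C\in\Lscr_S:\,e\in C} c(\ell_C)-\max_{C\in\Lscr_S:\,e\in C} c(\ell_C)\Bigr).
\]
Swapping the order of summation in the first inner sum and using $\sum_{e\in C}x^*_e=x^*(C)$, its total contribution is $2\sum_{C\in\Lscr_S}c(\ell_C)\cdot x^*(C)$. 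Setting $m(e):=\max_{C\in\Lscr_S:\,e\in C}c(\ell_C)$ (with $\max\emptyset=0$), the claimed bound $c^S(x^*)-c(x^*)\le 2\sum_{C\in\Lscr_S}(x^*(C)-1)\,c(\ell_C)$ is then \emph{equivalent} to
\[
 \sum_e x^*_e\cdot m(e)\ \ge\ \sum_{C\in\Lscr_S} c(\ell_C).
\]

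To establish this remaining inequality I would use a layer-cake decomposition: writing $m(e)=\int_0^\infty\mathbb{1}[m(e)>t]\,dt$ and $c(\ell_C)=\int_0^\infty\mathbb{1}[c(\ell_C)>t]\,dt$, it suffices to show, for every $t\ge 0$, that $x^*\big(\bigcup_{C\in\Lscr_S^{>t}}C\big)\ge|\Lscr_S^{>t}|$, where $\Lscr_S^{>t}:=\{C\in\Lscr_S:c(\ell_C)>t\}$. Indeed $m(e)>t$ holds precisely when some $C\in\Lscr_S^{>t}$ contains $e$, so $\{e:m(e)>t\}=\bigcup_{C\in\Lscr_S^{>t}}C$. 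Fix $t$, put $k:=|\Lscr_S^{>t}|$, and let $\Wscr$ be the partition of $V$ into the $k+1$ components obtained by deleting the $k$ distinct lonely edges $\{\ell_C:C\in\Lscr_S^{>t}\}$ from the tree $S$. Any edge of $\delta(\Wscr)$ has its endpoints in different components, so its $S$-path crosses one of these deleted edges, giving $\delta(\Wscr)\subseteq\bigcup_{C\in\Lscr_S^{>t}}C$. The partition constraint of~\eqref{eq:t_tour_lp} then yields
\[
 x^*\Bigl(\bigcup_{C\in\Lscr_S^{>t}}C\Bigr)\ \ge\ x^*(\delta(\Wscr))\ \ge\ |\Wscr|-1\ =\ k,
\]
as required.

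The main obstacle is the maximum appearing in $c^S$, which blocks any termwise comparison; the threshold decomposition is exactly the device that converts the bound into a family of cardinality-versus-LP-value statements, one per threshold, each of which the partition constraints handle. The two supporting points needing care are the combinatorial dictionary $e\in C\iff\ell_C\in P_S(e)$ and the fact that deleting $k$ \emph{distinct} lonely edges from a spanning tree produces a partition into $k+1$ parts, which is precisely what makes $|\Wscr|-1$ equal to $k$.
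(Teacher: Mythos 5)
Your proof is correct, and it reaches the paper's bound by a genuinely different device. Both arguments rest on the same essential ingredient: the inequality $x^*\bigl(\bigcup_{C\in \Lscr'}C\bigr)\ge |\Lscr'|$ for a subfamily $\Lscr'\subseteq\Lscr_S$, obtained by applying the partition constraint of \eqref{eq:t_tour_lp} to the components of $S$ minus the corresponding (pairwise distinct) lonely edges and observing that every edge of $\delta(\Wscr)$ lies in one of the cuts of $\Lscr'$ -- and your write-up proves exactly the containment $\delta(\Wscr)\subseteq\bigcup_{C\in\Lscr'}C$ that this step actually needs. Where you diverge is in how that inequality is converted into control of the max term in $c^S$: the paper constructs a fractional assignment $f$ on the bipartite incidence graph between edges and lonely cuts, with $f(\delta^+(e))\le x^*_e$ and $f(\delta^-(C))\ge 1$, whose existence is guaranteed by the Hall condition \eqref{eq:hall-condition} over \emph{all} subsets $\Lscr'\subseteq\Lscr_S$, and then runs a chain of termwise estimates; you instead observe that the lemma is equivalent to the single inequality $\sum_e x^*_e\, m(e)\ge\sum_{C\in\Lscr_S}c(S\cap C)$ and prove it by a layer-cake decomposition over thresholds $t$, which requires the key inequality only for the nested threshold families $\Lscr_S^{>t}$. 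Your route is the more elementary one -- no fractional Hall/flow argument is invoked, and the family of subsets that must be checked is a chain rather than the full power set -- while the paper's fractional assignment is the more standard, modular piece of machinery for such ``cover each cut with one unit of $x^*$-mass'' arguments. Both are complete and yield the identical final bound.
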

\begin{proof}
We consider the directed bipartite auxiliary graph with vertex set $E \cupp \Lscr_S$ and arc set 
$A:=\{ (e,C) : e\in E, C \in \Lscr_S, e\in C\}$.
We claim that there exists a function $f: A \to \mathbb{R}_{\ge 0}$ in this auxiliary graph such that $f(\delta^+(e)) \le x^*_e$ for all $e\in E$
and $f(\delta^-(C)) \ge 1$ for all $C\in \Lscr_S$.
By the Hall condition
 such a function $f$ exists if and only if for every subset $\Lscr' \subseteq \Lscr_S$
\begin{equation}\label{eq:hall-condition} 
 x^*\left(\bigcup_{C\in \Lscr'}C\right) \ge |\Lscr'|.
\end{equation}
Let $\Lscr' \subseteq \Lscr_S$ and let $L'\subseteq L_S$ be the set of lonely edges of $S$ that are contained in of the cuts $\Lscr_S$.
Note that every lonely edge $l\in L'$ is contained in only one lonely cut, namely the fundamental cut of $l$ in the tree $S$.
Since every cut in $\Lscr_S$ contains exactly one lonely edge of $S$, this implies $|L'|=|\Lscr'|$.
Let $\Wscr$ be the partition of $V$ that consists of the vertex sets of the connected components of $(V,S \setminus L')$.
Then $|\Wscr| = |\Lscr'| + 1$.  
Moreover, for every cut $C\in \Lscr'$ and every edge $\{v,w\}\in C$, the unique edge $l\in C\cap S$ is contained in the unique $v$-$w$-path in $S$.
Since $l\in L'$, this implies that $v$ and $w$ are contained in different connected components of $(V,S\setminus L')$ and hence $\{v,w\}\in \delta(\Wscr)$.
Using the constraints of the LP~\eqref{eq:t_tour_lp}, we therefore get
\[ 
  x^*\left(\bigcup_{C\in \Lscr'}C\right) \ge  x^*\left(\delta(\Wscr)\right) \ge |\Wscr| -1 = |\Lscr'|.
\]
This shows that a function $f$ as claimed does indeed exist.
Thus we have
\begin{align*}
 c^S(x^*) - c(x^*) =&  \sum_{e\in E} x^*_e \cdot 2 \cdot 
                       \left(\sum_{C\in\Lscr_S : e\in C} c(S\cap C) - \max_{C\in\Lscr_S : e\in C} c(S\cap C) \right) \\
                   \le& \sum_{e\in E} \sum_{C\in\Lscr_S : e\in C} 2x^*_e \cdot c(S\cap C) 
                      - \sum_{e\in E}  2 \cdot \left(\sum_{C\in\Lscr_S : e\in C} f(e,C)\right)
                         \cdot \max_{C\in\Lscr_S : e\in C} c(S\cap C) \\
                   \le& \sum_{e\in E} \sum_{C\in\Lscr_S : e\in C} 2x^*_e \cdot c(S\cap C) 
                       - \sum_{e\in E} 2 \cdot \sum_{C\in\Lscr_S : e\in C} f(e,C) \cdot c(S\cap C) \\
                   =& \sum_{C\in\Lscr_S} 2 \cdot c(S\cap C) \cdot \sum_{e\in C} \left(x^*_e - f(e,C)\right) \\
                   \le& \sum_{C\in\Lscr_S} 2 \cdot c(S\cap C) \cdot \left(x^*(C) - 1\right).
\end{align*}
\end{proof}

We now construct a parity correction vector $\bar y^S$ for $F_S$.
Let
\[
 \bar y^S := \frac{2}{5} x^* + \frac{1}{5} \chi^{S} + \frac{1}{5} \chi^{I_S \setminus L_S} + \sum_{C\in \Lscr_S} \frac{2}{5} (2-x^*(C)) \cdot \chi^{S\cap C}.
\] 
Before formally proving that $\bar y^S$ is a parity correction vector for $F_S$, let us briefly describe the purpose of the different parts.
The term $\frac{2}{5} x^* + \frac{1}{5} \chi^{S}$ is used to ensure $\bar y^S(C)\ge 1$ for every cut that contains at least three edges of $S$ or is not narrow. 
A narrow cut $C$ that contains precisely two edges of $S$ will either have the ``correct parity'' after deleting the lonely edges,
i.e.\ it contains exactly one edge of $F_S$ and is therefore not an $(\odd(F_S)\symdiff T)$-cut, or it contains no lonely edge.
In the latter case, the term $\frac{1}{5} \chi^{I_S \setminus L_S}$ will contribute to $\bar y^S(C)$.
Finally, the last term is used for the lonely cuts of $S$. These are all $(\odd(F_S)\symdiff T)$-cuts.

We now formally prove that $\bar y^S$ is indeed contained in the $(\odd(F_S) \symdiff T)$-join polyhedron.
\begin{lemma}
 For every $(\odd(F_S) \symdiff T)$-cut $C$ we have $\bar y^S(C) \ge 1$.
\end{lemma}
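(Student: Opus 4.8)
The plan is to do a case analysis on an arbitrary $(\odd(F_S)\symdiff T)$-cut $C=\delta(U)$, following the informal description preceding the lemma. The workhorse is the standard parity identity $|U\cap\odd(F)|\equiv |F\cap\delta(U)|\pmod 2$, valid for every edge set $F$, which lets me translate ``$C$ is an $(\odd(F_S)\symdiff T)$-cut'' into a statement about $|F_S\cap C|$. Since every narrow cut is a $T$-cut, so that $|U\cap T|$ is odd, a narrow $(\odd(F_S)\symdiff T)$-cut must have $|F_S\cap C|$ even; this is exactly the condition that decides which two-edge cuts still require correction.

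First I would dispose of the easy cuts using only the term $\frac{2}{5}x^*+\frac{1}{5}\chi^{S}$. Because the LP forces $x^*(C)\ge 1$ for every cut (the two-part partition constraint) and $S$ is a spanning tree, so $|S\cap C|\ge 1$, I get $\bar y^S(C)\ge \frac{2}{5}x^*(C)+\frac{1}{5}|S\cap C|\ge 1$ whenever $x^*(C)\ge 2$ (then $\frac25\cdot 2+\frac15\cdot 1=1$) or $|S\cap C|\ge 3$ (then $\frac25\cdot 1+\frac15\cdot 3=1$). This leaves only narrow cuts with $|S\cap C|\in\{1,2\}$.

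Next, for a narrow cut with $|S\cap C|=1$ we have $C\in\Lscr_S$, so the last sum contributes its $C$-term $\frac{2}{5}(2-x^*(C))\cdot|S\cap C|=\frac{2}{5}(2-x^*(C))$, and adding $\frac{2}{5}x^*(C)+\frac{1}{5}$ yields exactly $1$. For a narrow cut with $|S\cap C|=2$ that genuinely is an $(\odd(F_S)\symdiff T)$-cut, the parity identity forces $|F_S\cap C|\in\{0,2\}$. The case $|F_S\cap C|=0$ means both tree edges in $C$ are lonely, and the key obstacle is ruling this out: I would argue that deleting the two tree edges splits $S$ into three components $P,Q,R$ whose boundaries $\delta(P),\delta(Q),\delta(R)$ are precisely the fundamental cuts of the two lonely edges together with $C$ itself, so all three are narrow, hence $T$-cuts; then $|P\cap T|,|Q\cap T|,|R\cap T|$ are all odd and sum to the even number $|T|$, a contradiction. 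Hence $|F_S\cap C|=2$, i.e.\ neither edge is lonely.

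Finally, in this remaining subcase both edges survive in $F_S$ and $C\cap L_S=\emptyset$; since $C$ is a $T$-cut and $I_S$ is a $T$-join, $|I_S\cap C|$ is odd and at most $|S\cap C|=2$, hence equal to $1$, and that single edge lies in $I_S\setminus L_S$. Thus $\bar y^S(C)\ge \frac{2}{5}x^*(C)+\frac{1}{5}\cdot 2+\frac{1}{5}\cdot 1\ge 1$, again using $x^*(C)\ge 1$. The only genuinely non-routine ingredient is the three-part $T$-parity argument excluding two lonely edges in a single cut; the rest is the bookkeeping of summing the four terms of $\bar y^S$ in each case.
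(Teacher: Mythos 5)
Your proof is correct, and its overall skeleton coincides with the paper's: the same case analysis (non-narrow or $|S\cap C|\ge 3$ handled by $\frac25 x^*+\frac15\chi^S$; lonely cuts handled by the last term with equality; narrow two-edge cuts without lonely edges handled via $|I_S\cap C|=1$ and the term $\frac15\chi^{I_S\setminus L_S}$), and the same translation of the $(\odd(F_S)\symdiff T)$-cut condition into ``$|F_S\cap C|$ even'' via the parity of $|T\cap U|$. The one place where you genuinely diverge is in excluding the subcase $|S\cap C|=2$, $|F_S\cap C|=0$: you delete both tree edges of $C$, observe that the three resulting components have as boundaries the two fundamental cuts of the (allegedly lonely) edges together with $C$ itself, and derive a contradiction from three odd intersections with $T$ summing to the even number $|T|$. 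The paper instead gets this for free from the containment $L_S\subseteq I_S$: in its Case~1 ($|F_S\cap C|=0$, $|S\cap C|\le 2$) it notes $|I_S\cap C|$ is odd and at most $2$, hence $|L_S\cap C|\le|I_S\cap C|=1$, which forces $|S\cap C|=1$, i.e.\ $C\in\Lscr_S$. Your argument is more geometric and self-contained (it directly proves the parenthetical remark from Section~\ref{sect:our_contribution} that a narrow cut cannot contain two lonely edges), while the paper's is shorter because it reuses the already-established fact $L_S\subseteq I_S$ and merges this exclusion with the treatment of genuine lonely cuts into a single case. Both are sound, and the arithmetic in all remaining cases is identical.
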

\begin{proof}
 Let $C$ be an $(\odd(F_S) \symdiff T)$-cut.
 If $C$ is not narrow, we have \
 \[
 \bar y^S(C) \ge \tfrac{2}{5} x^*(C) + \tfrac{1}{5} |S \cap C| \ge 
 \tfrac{2}{5} \cdot 2 + \tfrac{1}{5} \cdot 1 = 1.
 \]
 Let now $C\in \Nscr$. 
 Then $C$ is a $T$-cut. 
 Since $C$ is an $(\odd(F_S) \symdiff T)$-cut, this implies $|F_S\cap C|$ even.
 Hence it suffices to consider the following three cases.\\[1mm]
 \emph{Case 1:} $|F_S\cap C|=0$ and $|S\cap C|\le 2$. \\[1mm]
 Because the cut $C$ is narrow, it is a $T$-cut and hence $|I_S \cap C|$ is odd.
 This implies $|I_S \cap C|=1$.
 Recall that $S \setminus F_S = L_S \subseteq I_S$.
 Therefore, $|(S \setminus F_S)\cap C| \le |I_S \cap C|=1$.
 Using $|F_S\cap C|=0$, we conclude $|S\cap C| = 1$ and hence $C\in \Lscr_S$.
 Thus, 
 \[ \bar y^S(C)\ \ge\ \tfrac{2}{5} x^*(C) + \tfrac{1}{5} |S\cap C| + \tfrac{2}{5} (2-x^*(C)) |S\cap C|\ =\ 
 \tfrac{2}{5} x^*(C) + \tfrac{1}{5} + \tfrac{2}{5} (2-x^*(C))\ =\ 1.\]
 \emph{Case 2:} $|S\cap C| = |F_S\cap C|=2$. \\[1mm]
 Because $F_S=S\setminus L_S$, we have $L_S \cap C=\emptyset$.
 Recall that $C\in \Nscr$ and hence $C$ is a $T$-cut, implying $|C\cap I_S| \ge 1$.
 Therefore, $|(I_S \setminus L_S) \cap C| \ge 1$.
 We conclude 
 \[
   \bar y^S(C)\ \ge\ \tfrac{2}{5} x^*(C) + \tfrac{1}{5} |S \cap C| + \tfrac{1}{5} |(I_S \setminus L_S) \cap C|
          \ \ge\  \tfrac{2}{5} \cdot 1 + \tfrac{1}{5} \cdot 2  + \tfrac{1}{5} \cdot 1
          \ =\ 1.
 \]
 \emph{Case 3:} $|S\cap C|\ge 3$. \\[1mm]
 Then $\bar y^S(C) \ge \tfrac{2}{5} x^*(C) + \tfrac{1}{5} |S\cap C| \ge  \tfrac{2}{5} \cdot 1 + \tfrac{1}{5} \cdot 3 = 1$.
\end{proof} 

We now prove Lemma~\ref{lemma:edge_deletion_bound}.
For every edge $e\in S$ we have $c^S(e)=c(e)$.
Therefore, by Lemma~\ref{lemma:reconnection_cost} we have 
\begin{equation}\label{eq:reconnection_cost}
c^S(\bar y^S) - c(\bar y^S)\ \le\ c^S(\tfrac{2}{5} x^*) - c(\tfrac{2}{5} x^*)
                 \ \le\ \sum_{C\in \Lscr_S} \tfrac{4}{5}\cdot (x^*(C)-1) \cdot c(S\cap C).
\end{equation}
Hence,
\begin{align*}
 c^S(\bar y^S) \ \le&\ \tfrac{2}{5} c(x^*) + \tfrac{1}{5} c(S) +  \tfrac{1}{5} c(I_S \setminus L_S) + \sum_{C\in \Lscr_S} 
 \left(\tfrac{2}{5} (2-x^*(C)) +  \tfrac{4}{5} (x^*(C)-1) \right) c(S\cap C)\\
 =&\ \tfrac{2}{5} c(x^*) + \tfrac{1}{5} c(S) +  \tfrac{1}{5} c(I_S \setminus L_S) + \sum_{C\in \Lscr_S} 
 \tfrac{2}{5} x^*(C)\cdot c(S\cap C)\\
 =&\ \tfrac{2}{5} c(x^*) + \tfrac{1}{5} c(S) +  \tfrac{1}{5} c(I_S \setminus L_S) + \tfrac{4}{5} c(L_S)
       -  \tfrac{2}{5} \sum_{C\in \Lscr_S}(2-x^*(C))\cdot c(S\cap C).
\end{align*}
Moreover, $c(F_S)=c(S)-c(L_S)$, implying
\begin{align*}
 c(F_S) +  c^S(\bar y^S)\ \le&\ \tfrac{2}{5} c(x^*) + \tfrac{6}{5} c(S) +  \tfrac{1}{5} c(I_S \setminus L_S) - \tfrac{1}{5} c(L_S) -  \tfrac{2}{5} \sum_{C\in \Lscr_S}(2-x^*(C))\cdot c(S\cap C)\\
 =&\ \tfrac{2}{5} c(x^*) + \tfrac{6}{5} c(S) + \tfrac{1}{5} c(I_S)  - \tfrac{2}{5} c(L_S)-  \tfrac{2}{5} \sum_{C\in \Lscr_S}(2-x^*(C))\cdot c(S\cap C).
\end{align*}
Therefore, the cheapest of the computed $T$-tours has cost at most 
\begin{align*}
 &\hspace*{-5mm}\min_{S\in \Sscr: p_S > 0} \left( c(F_S) +  c^S(\bar y^S) \right) \\[2mm]
 \le&\ \sum_{S\in\Sscr} p_S \cdot \left( c(F_S) +  c^S(\bar y^S) \right) \\
 \le&\ \sum_{S\in\Sscr} p_S \cdot \left( \tfrac{2}{5} c(x^*) + \tfrac{6}{5} c(S) + \tfrac{1}{5} c(I_S)  - \tfrac{2}{5} c(L_S) -  \tfrac{2}{5} \sum_{C\in \Lscr_S}(2-x^*(C))\cdot c(S\cap C)\right) \\
 =&\ \tfrac{8}{5} c(x^*) + \tfrac{1}{5} c(I_p)  - \tfrac{2}{5} c(L_p) - \tfrac{2}{5} \sum_{C\in \Nscr} (2-x^*(C))^2 \cdot c(v^C).
\end{align*}
This completes the proof of Lemma~\ref{lemma:edge_deletion_bound}.

\section{Proof of the overall approximation ratio}\label{sect:overall}

We now combine the results from the previous sections to prove our main result.
\begin{theorem}
There is a polynomial-time algorithm that computes for every instance of the $T$-tour problem
a solution of cost at most $\frac{11}{7}$ times the value of the LP \eqref{eq:t_tour_lp}.
\end{theorem}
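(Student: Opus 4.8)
The plan is to bound the cost of the cheaper of the two computed $T$-tours by the minimum of three upper bounds and to show this minimum never exceeds $\tfrac{11}{7}c(x^*)$. The Best-of-Many-Christofides tour (Algorithm~\ref{algo:bomc}) is bounded both by Lemma~\ref{lemma:simple_bound} and, for any fixed $\alpha\ge 0$, by Lemma~\ref{lemma:bomc_bound}, while the tour of Algorithm~\ref{algo:deleting_edges} is bounded by Lemma~\ref{lemma:edge_deletion_bound}. Hence the returned solution costs at most the minimum of $2c(x^*)-c(I_p)$ (rewriting $c(x^*)+c(J_p)$ via $c(x^*)=c(I_p)+c(J_p)$), the expression of Lemma~\ref{lemma:bomc_bound}, and the expression of Lemma~\ref{lemma:edge_deletion_bound}. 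I fix $\alpha=\tfrac{1}{14}$ in Lemma~\ref{lemma:bomc_bound}.

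First I would distinguish two cases according to the size of $c(I_p)$. If $c(I_p)>\tfrac37 c(x^*)$, then Lemma~\ref{lemma:simple_bound} already gives cost at most $2c(x^*)-c(I_p)<2c(x^*)-\tfrac37 c(x^*)=\tfrac{11}{7}c(x^*)$, and we are done. The interesting case is $c(I_p)\le\tfrac37 c(x^*)$, where I bound the minimum of the two remaining expressions by their convex combination with weights $\tfrac{14}{19}$ (for Lemma~\ref{lemma:bomc_bound}) and $\tfrac{5}{19}$ (for Lemma~\ref{lemma:edge_deletion_bound}). To evaluate this combination I would use the two structural facts already available: the identity $\sum_{C\in \Nscr}(2-x^*(C))\,c(v^C)=c(L_p)$ coming from $L_p=\sum_{C}(2-x^*(C))v^C$, and the inequality $c(L_p)\le c(I_p)$ coming from $L_S\subseteq I_S$.

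Collecting terms, the $c(x^*)$-coefficient is $\tfrac{14}{19}\cdot\tfrac32+\tfrac{5}{19}\cdot\tfrac85=\tfrac{29}{19}$ and the $c(I_p)$-coefficient is $\tfrac{14}{19}\cdot\tfrac1{14}+\tfrac{5}{19}\cdot\tfrac15=\tfrac{2}{19}$, while the term $-\tfrac{5}{19}\cdot\tfrac25\,c(L_p)$ merges, via the identity above, with the two cut-dependent sums into
\[
 \frac1{19}\sum_{C\in \Nscr} c(v^C)\Big(14(x^*(C)-1)\max\{1-\tfrac12 x^*(C)-\tfrac1{14},0\}-2(2-x^*(C))(3-x^*(C))\Big).
\]
The crux is then the per-cut inequality $14(x-1)\max\{1-\tfrac x2-\tfrac1{14},0\}\le 2(2-x)(3-x)$ for all $x\in[1,2)$; substituting $u=2-x$ it becomes a concrete quadratic inequality in $u$ that is tangent to zero at $x=\tfrac53$. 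This makes every summand above nonpositive, so the whole combination is at most $\tfrac{29}{19}c(x^*)+\tfrac{2}{19}c(I_p)$. Finally $c(I_p)\le\tfrac37 c(x^*)$ gives $\tfrac{29}{19}c(x^*)+\tfrac{2}{19}\cdot\tfrac37 c(x^*)=\tfrac{11}{7}c(x^*)$, and since the minimum of the two bounds is at most their convex combination the argument closes.

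I expect the main obstacle to be pinning down the three free parameters — the weights $\tfrac{14}{19}$ and $\tfrac{5}{19}$, the value $\alpha=\tfrac1{14}$, and the threshold $\tfrac37$ for the case split — so that simultaneously the per-cut quadratic stays nonpositive (which forces its discriminant to vanish and is exactly what singles out $\alpha=\tfrac1{14}$ together with the weights) and the scalar bound closes to \emph{exactly} $\tfrac{11}{7}$. Note that the tangency point $x=\tfrac53$ is precisely where the new negative quadratic term $-\tfrac25\sum_C(2-x^*(C))^2 c(v^C)$ of Lemma~\ref{lemma:edge_deletion_bound} exactly compensates the positive parity-correction term of Lemma~\ref{lemma:bomc_bound}; everything else is routine verification of this quadratic inequality and bookkeeping of coefficients.
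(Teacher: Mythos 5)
Your proposal is correct and follows essentially the same route as the paper: the same choice $\alpha=\tfrac{1}{14}$, the same $14:5$ weighting of Lemma~\ref{lemma:bomc_bound} against Lemma~\ref{lemma:edge_deletion_bound}, the identity $c(L_p)=\sum_{C\in\Nscr}(2-x^*(C))\,c(v^C)$, and the same per-cut inequality reducing to $(3x-5)^2\ge 0$ with tangency at $x=\tfrac53$. The only (cosmetic) difference is that you handle Lemma~\ref{lemma:simple_bound} by a case distinction on whether $c(I_p)\le\tfrac37 c(x^*)$, whereas the paper folds it into a single three-way convex combination with weights $\tfrac{2}{21},\tfrac{2}{3},\tfrac{5}{21}$ and cancels $c(I_p)+c(J_p)$ against $c(x^*)$.
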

\begin{proof}
The algorithm we analyze is the following.
Run Algorithm~\ref{algo:bomc} and Algorithm~\ref{algo:deleting_edges} and return the better of the two resulting $T$-tours.
Then each of Lemma~\ref{lemma:simple_bound}, Lemma~\ref{lemma:bomc_bound}, and Lemma~\ref{lemma:edge_deletion_bound} yields an upper bound on 
the cost of the resulting $T$-tour.
We will take a convex combination of these three upper bounds.
The bounds from Lemma~\ref{lemma:simple_bound}, Lemma~\ref{lemma:bomc_bound}, and Lemma~\ref{lemma:edge_deletion_bound} will be weighted with $\lambda_1, \lambda_2, \lambda_3$, respectively.
We set $\lambda_1 := \frac{2}{21}$, $\lambda_2 := \frac{2}{3}$, $\lambda_3 := \frac{5}{21}$, and $\alpha := \frac{1}{14}$.
Then $\lambda_1, \lambda_2, \lambda_3 \ge 0$ and $\lambda_1 + \lambda_2 + \lambda_3 = 1$.
Hence, our algorithm computes a $T$-tour of cost at most
\begin{align*}
 & \lambda_1 \cdot \Big( c(x^*) + c(J_p) \Big) \\
 & + \lambda_2 \cdot \left(  \tfrac{3}{2} c(x^*) + \alpha \cdot c(I_p) + \sum_{C\in \Nscr} (x^*(C) -1) \cdot \max\{1-\tfrac{1}{2} x^*(C) - \alpha, 0 \} \cdot c(v^C) \right) \\
 & + \lambda_3 \cdot \left( \tfrac{8}{5} c(x^*) + \tfrac{1}{5} c(I_p) - \tfrac{2}{5} c(L_p)- \tfrac{2}{5} \sum_{C\in \Nscr} (2-x^*(C))^2 \cdot c(v^C) \right) 
 \\[2mm]
 =\ & \left(1+ \tfrac{1}{2} \lambda_2 + \tfrac{3}{5} \lambda_3 \right) \cdot c(x^*) + \lambda_1 \cdot c(J_p)  
      + \left( \alpha \cdot \lambda_2 + \tfrac{1}{5} \lambda_3\right) \cdot c(I_p) - \tfrac{2}{5} \lambda_3 \cdot c(L_p) \\
    & + \sum_{C\in \Nscr} \left(  (x^*(C) -1) \cdot \max\{1-\tfrac{1}{2} x^*(C) - \alpha, 0 \} \cdot \lambda_2
                                - \tfrac{2}{5} (2-x^*(C))^2 \cdot \lambda_3 \right) \cdot  c(v^C) 
                                \\[2mm]
 =\ & \tfrac{31}{21} \cdot c(x^*) + \tfrac{2}{21} \cdot c(J_p)  
      + \tfrac{2}{21} \cdot c(I_p) - \tfrac{2}{21} \cdot c(L_p) \\
    & + \sum_{C\in \Nscr} \left(  (x^*(C) -1) \cdot \max\{\tfrac{13}{14} -\tfrac{1}{2} x^*(C), 0 \} \cdot \tfrac{2}{3}
                                - \tfrac{2}{21} \cdot (2-x^*(C))^2 \right) \cdot  c(v^C) 
                                \\[2mm]
 =\ & \tfrac{11}{7}  \cdot c(x^*) - \tfrac{2}{21} \cdot c(L_p) \\
    & + \sum_{C\in \Nscr} \left(  (x^*(C) -1) \cdot \max\{\tfrac{13}{14} -\tfrac{1}{2} x^*(C), 0 \} \cdot \tfrac{2}{3}
                                - \tfrac{2}{21} \cdot (2-x^*(C))^2 \right) \cdot  c(v^C),
\end{align*}
where we used $x^* = J_p + I_p$.
Using $L_p = \sum_{C \in \Nscr} (2-x^*(C)) \cdot v^C$, this implies the following 
upper bound on the cost of our $T$-tour:
\begin{align*}
 &  \tfrac{11}{7} \cdot c(x^*) - \tfrac{2}{21} \cdot c(L_p) 
     + \max_{C\in \Nscr} \left(\tfrac{x^*(C) -1}{2-x^*(C)} \cdot \max\{\tfrac{13}{14} -\tfrac{1}{2} x^*(C),\ 0 \} \cdot \tfrac{2}{3}
                                - \tfrac{2}{21} \cdot (2-x^*(C)) \right) \cdot  c(L_p) 
                                \\[2mm]
  \le\ & \tfrac{11}{7} \cdot c(x^*) - \tfrac{2}{21} \cdot c(L_p) 
     + \max_{1\le x < 2} \left(\tfrac{x -1}{2-x} \cdot \max\{\tfrac{13}{14} -\tfrac{1}{2} x,\ 0 \} \cdot \tfrac{2}{3}
                                - \tfrac{2}{21} \cdot (2-x) \right) \cdot  c(L_p) 
                                \\[2mm]     
   =\ & \tfrac{11}{7} \cdot c(x^*) - \tfrac{2}{21} \cdot c(L_p) 
     + \tfrac{1}{21} \cdot \max_{1\le x < 2} \left(\tfrac{x -1}{2-x} \cdot \max\{13 -7 x,\ 0 \} + 2 x - 4 \right) \cdot  c(L_p) 
                                \\[2mm]                                
  =\ & \tfrac{11}{7} \cdot c(x^*) - \tfrac{2}{21} \cdot c(L_p) + \tfrac{2}{21} \cdot  c(L_p) \\[2mm]
  =\ & \tfrac{11}{7}  \cdot c(x^*),
\end{align*}
where the maximum is attained for $x=\tfrac{5}{3}$.
\end{proof}

\end{document}